\newcommand\articlecommand[1]{{\textsf{#1}}}
\newcommand{\inference}[3]{\ensuremath{\frac{~#2~}{~#3~}{\mbox{~\articlecommand{#1}}}}}
\newcommand{\theory}[2]{{#1}\,\triangleright\,{#2}}
\theoremstyle{definition}
\newtheorem{definition}{Definition}[section]
\theoremstyle{theorem}
\newtheorem{theorem}{Theorem}[section]
\theoremstyle{theorem}
\newtheorem{lemma}{Lemma}[section]
\theoremstyle{remark}
\newtheorem{remark}{Remark}[section]
\DeclareMathOperator{\NF}{NF}
\DeclareMathOperator{\TD}{TD}
\DeclareMathOperator{\Ch}{Ch}
\DeclareMathOperator{\toWff}{toWff}
\DeclareMathOperator{\toBool}{toBool}
\title{Conversion of HOL Light proofs into Metamath}
\author{Mario Carneiro}
{MARIO M. CARNEIRO\\Department of Mathematics, Ohio State University, Columbus OH 43210, USA}
\begin{abstract}
We present an algorithm for converting proofs from the OpenTheory interchange format, which can be translated to and from any of the HOL family of proof languages (HOL4, HOL Light, ProofPower, and Isabelle), into the ZFC-based Metamath language. This task is divided into two steps: the translation of an OpenTheory proof into a Metamath HOL formalization, \texttt{hol.mm}, followed by the embedding of the HOL formalization into the main ZFC foundations of the main Metamath library, \texttt{set.mm}. This process provides a means to link the simplicity of the Metamath foundations to the intense automation efforts which have borne fruit in HOL Light, allowing the production of complete Metamath proofs of theorems in HOL Light, while also proving that HOL Light is consistent, relative to Metamath's ZFC axiomatization. 
\end{abstract}
\begin{document}


\begin{bottomstuff}
\permission
\copyright\ 2015 Journal of Formal Reasoning
\end{bottomstuff}

\maketitle

\section{Introduction}

Metamath is a proof language, developed in 1992, on the principle of minimizing the foundational logic to as little as possible \cite{metamath}. The resulting logic has only one built-in rule of inference, direct substitution, and all syntax and axioms are input outside the logical core. The most well-developed axiom system in Metamath is called \texttt{set.mm}, and adds the axioms of classical propositional calculus, first-order predicate calculus, and ZFC set theory to the Metamath foundations; this axiom system and associated library of theorems is also sometimes referred to as Metamath. In contrast, OpenTheory is an interchange format for the HOL family of proof languages (HOL4, HOL Light, ProofPower/HOL, and Isabelle/HOL) based on a higher-order logical kernel \cite{otbook}. All the axioms of HOL are built into the kernel, and several axioms perform proper substitution, which involves the renaming of bound variables, as part of their operation. The goal of this paper is to present an algorithm that transforms valid theorem derivations in OpenTheory into equivalent theorem derivations in \texttt{set.mm}, as a roadmap for an eventual implementation.

The main task divides neatly into two parts. First, the axioms and inferences of OpenTheory are translated into their Metamath equivalents, producing a new database of axioms which we will call \texttt{hol.mm}. At this stage it is still essentially a HOL derivation, but the inferences are done ``the Metamath way.'' The primary job here is to eliminate proper substitutions and dummy variable renaming, which must be done over several steps in Metamath, and introduce metavariables in place of free variables in the final statement.

The second task is to convert a \texttt{hol.mm} derivation into a \texttt{set.mm} derivation. This step is done entirely within Metamath, and works by constructing a model of HOL within ZFC. Types become sets, functions become ZFC functions (sets of ordered pairs), and the indefinite descriptor becomes a choice function on the HOL universe.

\section{Part I: Conversion from OpenTheory to \MakeLowercase{\texttt{hol.mm}}}

In this part, we present a transformation from OpenTheory article file format to Metamath \texttt{hol.mm} format.

\subsection{OpenTheory}
\label{sec:opentheory}

Although the original goal of this research was the translation of HOL Light proofs to Metamath, the source of the translation was changed to OpenTheory because it already provides bidirectional translation to HOL Light. Furthermore, the OpenTheory file format is much simpler than the HOL Light file format, requiring only minimal work to read the derivations into an article reader. But more importantly, the derivations are all laid out already without needing to be generated first, since HOL Light ``proofs'' are not actually derivations but instructions for searching for derivations. By targeting OpenTheory we are able to skip the step of deriving the proof to begin with and start from a baseline of a complete proof in HOL-compatible format.

\afterpage{
\begin{figure}
\begin{center}
\framebox[\hsize][c]{
  \vbox{
    \vspace{2mm}
    \hbox to\hsize{
      \hspace{2mm}
      \(\displaystyle\inference{varTerm $v$\footnotemark[1]}{}{v:\alpha}\)
      \hfill
      \(\displaystyle\inference{absTerm $v$\footnotemark[1]}
          {t:\beta}{(\lambda v.\ t):\alpha\to\beta}\)
      \hfill
      \(\displaystyle\inference{appTerm}{f:\alpha\to\beta\quad x:\alpha}{f\;x:\beta}\)
      \hspace{2mm}
    }
    \vspace{5mm}
    \hbox to\hsize{
      \hspace{2mm}
      \(\displaystyle\inference{refl}{t:\alpha}{\vdash t = t}\)
      \hfill
      \(\displaystyle\inference{assume}{\phi:{\sf bool}}{\{\phi\}\vdash\phi}\)
      \hfill
      \(\displaystyle\inference{eqMp\footnotemark[2]}
          {\Gamma\vdash\phi'=\psi\quad\Delta\vdash\phi}
          {\Gamma\cup\Delta\vdash\psi}\)
      \hspace{2mm}
    }
    \vspace{5mm}
    \hbox to\hsize{
      \hspace{2mm}
      \(\displaystyle\inference{appThm}
          {f:\alpha\to\beta\quad x:\alpha\quad
            \Gamma\vdash f = g\quad\Delta\vdash x = y}
          {\Gamma\cup\Delta\vdash f\;x = g\;y}\)
      \hfill
      \(\displaystyle\inference{absThm $v$\footnotemark[3]}{\Gamma\vdash t = u}
          {\Gamma\vdash(\lambda v.\ t) = (\lambda v.\ u)}\)
      \hspace{2mm}
    }
    \vspace{5mm}
    \hbox to\hsize{
      \hspace{2mm}
      \(\displaystyle\inference{deductAntisym}
          {\Gamma\vdash\phi\quad\Delta\vdash\psi}
          {(\Gamma-\{\psi\})\cup(\Delta-\{\phi\})\vdash\phi=\psi}\)
      \hfill
      \(\displaystyle\inference{betaConv}
          {((\lambda v.\ t)\;u):\alpha}{\vdash(\lambda v.\ t)\;u = t[v\to u]}\)
      \hspace{2mm}
    }
    \vspace{5mm}
    \hbox to\hsize{
      \hspace{2mm}
      \(\displaystyle\inference{subst $\sigma$}
          {\Gamma\vdash\phi}
          {\Gamma[\sigma]\vdash\phi[\sigma]}\)
      \hfill
      \(\displaystyle\inference{defineConst $c$}
          {t:\alpha}{c:\alpha\quad\vdash c = t}\)
      \hspace{2mm}
    }
    \vspace{5mm}
    \hbox to\hsize{
      \hspace{2mm}
      \(\displaystyle\inference{defineTypeOp $A[vs]$
            $\mbox{\textrm{\textit{abs}}}$
            $\mbox{\textrm{\textit{rep}}}$\footnotemark[4]}
          {t:\alpha\quad\vdash\phi\;t}{\vdash\mbox{\textrm{\textit{abs}}}\;
            (\mbox{\textrm{\textit{rep}}}\;a) = a\quad\vdash\phi\;r =
            (\mbox{\textrm{\textit{rep}}}\;
            (\mbox{\textrm{\textit{abs}}}\;r) = r)}\)
      \hfill
    }
    \vspace{2mm}
  }
}
\end{center}
\caption[The OpenTheory logical kernel]{The OpenTheory logical kernel.\footnotemark[5]}
\label{fig:otaxioms}
\end{figure}
\footnotetext[1]{The variable $v$ is assumed to be of type $\alpha$ in these rules.}
\footnotetext[2]{The terms $\phi,\phi'$ are required to be $\alpha$-equivalent.}
\footnotetext[3]{The variable $v$ must not be free in $\Gamma$. Equivalently, $\Gamma$ must be $\alpha$-equivalent to a set of terms $\Gamma'$ that do not have $v$ as a subterm.}
\footnotetext[4]{The list of variables $vs$ must match the set of free variables in $\phi$; this defines a type operator $A[vs]$ and constants $\mbox{\textrm{\textit{abs}}}:\alpha\to A[vs]$,
$\mbox{\textrm{\textit{rep}}}:A[vs]\to\alpha$.}
\footnotetext[5]{The astute reader will notice that this table is more detailed than figure 2 of \cite{otbook}, upon which it was based. This is due to the inclusion of explicit term formation axioms \textsf{varTerm}, \textsf{absTerm}, \textsf{appTerm}; these were elided in the presentation of \cite{otbook} but are included in later versions of the specification (see \url{http://gilith.com/research/opentheory/article.html}).}
}

An OpenTheory article file works as a stack machine, and the file format is very simple---a sequence of commands that manipulate the stack, separated by newlines \cite{otbook}. An article reader maintains a stack, a dictionary for backreferencing previous computations and theorems, and a list $\Gamma$ of assumptions and an export list $\Delta$ of theorems. The result of the computation after all instructions in the file are executed is a ``theory'' $\theory{\Gamma}{\Delta}$ that states that the theorems in $\Delta$ are derivable from the axioms in $\Gamma$. Figure~\ref{fig:otaxioms} shows the inference rules supported by the logical kernel. The base syntax involves two types of variables: $t,u,f,g,x,y$ denote term metavariables (and $\phi,\psi$ denote term metavariables of type $\sf bool$), while $\alpha,\beta$ are type metavariables. By ``metavariable'' we mean that an application of any of these axioms will not involve a literal $t$ but will have $t$ replaced by some term, which is itself constructed by application of these rules (by contrast to another notion of ``metavariable'' used in Metamath presentations; see section~\ref{sec:hol-mm}). The built-in constants are the type constant $\sf bool$, the function type operator $\alpha\to\beta$, and the equality operator ${=}:\alpha\to(\alpha\to{\sf bool})$ (the term $({=}\;x)\;y$ is presented as $x=y$ for clarity). Each term variable $v$ is constructed from a name (a string) and a type, and two variables are considered equal only if the name and type are equal.

A derivation is structured as a list of theorems of the form $\Gamma\vdash\phi$, where $\phi$ is a term of type $\sf bool$, and $\Gamma$ is a finite set of terms of type $\sf bool$. The set unions and differences in \textsf{deductAntisym} treat terms that are $\alpha$-equivalent as equal, where two terms are considered to be $\alpha$-equivalent if there is a consistent mapping of bound variables that transforms one term into the other. More precisely:

\begin{definition}
  Given a map $\sigma$ of variables to variables, terms $t,u$ are said to be \textit{$\alpha$-equivalent with respect to $\sigma$} if one of the following conditions is met:
  \begin{longitem}
    \item $t=u=c$ for some constant term $c$
    \item $t=u=v$ for some variable term $v$ not in the domain of $\sigma$
    \item $t=v$ and $u=\sigma(v)$ for some variable term $v$ in the domain of $\sigma$
    \item $t=f\;x$ and $u=g\;y$, and $f,g$ and $x,y$ are $\alpha$-equivalent with respect to $\sigma$
    \item $t=(\lambda w.\ x)$ and $u=(\lambda v.\ y)$, and $x,y$ are $\alpha$-equivalent with respect to $\sigma[w\mapsto v]$, where $\sigma[w\mapsto v]$ represents the map $\sigma$ with $\sigma(w)=v$ added to the map, replacing any mapping for $w$ if it exists.
  \end{longitem}
  Two terms $t,u$ are said to be \textit{$\alpha$-equivalent} if they are $\alpha$-equivalent with respect to an empty map.
\end{definition}

The term $\phi'$ in \textsf{eqMp} is required to be $\alpha$-equivalent to $\phi$. This definition is related to the definition of a proper substitution, used in rule \textsf{subst}.

\begin{definition}
  Given a map $\sigma$ of term variables to terms and a term $t$, the proper substitution $t[\sigma]$ is defined by structural induction as follows:
  \begin{longitem}
    \item For a constant term $t=c$ or a variable term $t=v$ not in the domain of $\sigma$, $t[\sigma]=t$.
    \item For a variable term $t=v$ in the domain of $\sigma$, $t[\sigma]=\sigma(v)$.
    \item If $t=f\;x$, then $t[\sigma]=f[\sigma]\;x[\sigma]$.
    \item If $t=(\lambda v.\ x)$ and some variable present as a subterm of $t$ (this includes $v$) is also in the domain of $\sigma$, then $t[\sigma]=(\lambda w.\ x[v\mapsto w][\sigma])$, where $w$ is a dummy variable distinct from variables in subterms of $t$ of the same type as $v$, otherwise $t[\sigma]=(\lambda v.\ x[\sigma])$.
  \end{longitem}
  For a set $\Gamma$ of terms, $\Gamma[\sigma]=\{t[\sigma]:t\in\Gamma\}$. We use the same notation $t[\sigma]$ for a substitution of \textit{type variables} to types, but in this case the substitution is direct (distributes through all term and type construction operators).
\end{definition}

The OpenTheory stack machine also contains a command \textsf{axiom} that accepts a list of terms $\Gamma$ and a term $\phi$ and produces the theorem $\Gamma\vdash\phi$, while also adding $\Gamma\vdash\phi$ to the list of assumptions, and a command \textsf{thm} that takes as input $\Gamma$ and $\phi$ and also a theorem $\Gamma'\vdash\phi'$ where $\Gamma'$ and $\phi'$ are $\alpha$-equivalent to $\Gamma$ and $\phi$, and adds $\Gamma\vdash\phi$ to the list of exported theorems. These rules are sufficient to completely describe the OpenTheory logical kernel.

In addition to the above axioms, there are three axioms that are appended in order to develop the full HOL foundations:

\label{dsc:otextras}
\begin{describe}{\textsf{extensionality}}
\item[\textsf{extensionality}] $\vdash\forall t.\ (\lambda x.\ t\; x)=t$
\item[\textsf{choice}] $\vdash\forall p, x.\ p\;x\Rightarrow p\;(\epsilon p)$
\item[\textsf{infinity}] $\vdash\exists f^{\sf ind\to ind}.\ ({\sf injective}\ f\wedge \neg{\sf surjective}\ f)$
\end{describe}

However, these require more advanced definitions such as $\forall$ and $\Rightarrow$, and their translations are straightforward, so can be safely ignored for Part I. These definitions also implicitly introduce the type constant $\sf ind$ (the intended model is as any infinite set) and the ``indefinite descriptor'' $\epsilon:(\alpha\to{\sf bool})\to\alpha$, which is the HOL equivalent of a global choice function. We will consider these constructs in more detail in Part II.

\subsection{\texttt{hol.mm}}
\label{sec:hol-mm}
  
The primary features of the OpenTheory system (see section~\ref{sec:opentheory}) that are not available in Metamath-based axiomatizations are proper substitution and $\alpha$-equivalence, the ``not free in'' predicate, and set manipulation of the contexts (the $\Gamma$ in $\Gamma\vdash\phi$). To address the last problem, we introduce a new term constructor, the ``context conjunction'', which takes as input terms $\phi$ and $\psi$ of type $\sf bool$ and produces a term $(\phi,\psi)$, also of type bool. When $\wedge$ is defined, it becomes possible to prove that $(\phi\wedge\psi)=(\phi,\psi)$, but before this it is necessary to introduce this as part of the axiomatization for the ``bootstrapping'' phase. After taking $\top$ as axiomatic, it becomes possible to represent $\Gamma\vdash\phi$ as simply $\psi\vdash\phi$ by using the context conjunction to put all the terms in $\Gamma$ into one term, and use $\psi=\top$ if $\Gamma=\emptyset$.

\afterpage{
\begin{figure}
\begin{center}
\framebox[\hsize][c]{
  \vbox{
    \vspace{2mm}
    \hbox to\hsize{
      \hspace{2mm}
      \(\displaystyle\inference{wtru}{}{\top:{\sf bool}}\)
      \hfill
      \(\displaystyle\inference{wct}{R:{\sf bool}\quad S:{\sf bool}}
          {(R,S):{\sf bool}}\)
      \hfill
      \(\displaystyle\inference{wc}{F:\alpha\to\beta\quad T:\alpha}
          {(F\;T):\beta}\)
      \hfill
      \(\displaystyle\inference{wv}{}
          {x^\alpha:\alpha}\)
      \hspace{2mm}
    }
    \vspace{5mm}
    \hbox to\hsize{
      \hspace{2mm}
      \(\displaystyle\inference{wl}{T:\beta}
          {(\lambda x^\alpha.\ T):\alpha\to\beta}\)
      \hfill
      \(\displaystyle\inference{id}{R:{\sf bool}}{R\vdash R}\)
      \hfill
      \(\displaystyle\inference{syl}{R\vdash S\quad S\vdash T}{R\vdash T}\)
      \hfill
      \(\displaystyle\inference{jca}
          {R\vdash S\quad R\vdash T}{R\vdash (S,T)}\)
      \hspace{2mm}
    }
    \vspace{5mm}
    \hbox to\hsize{
      \hspace{2mm}
      \(\displaystyle\inference{weq}{}{{=}:\alpha\to(\alpha\to{\sf bool})}\)
      \hfill
      \(\displaystyle\inference{simpl}{R:{\sf bool}\quad S:{\sf bool}}
          {(R,S)\vdash R}\)
      \hfill
      \(\displaystyle\inference{simpr}{R:{\sf bool}\quad S:{\sf bool}}
          {(R,S)\vdash S}\)
      \hspace{2mm}
    }
    \vspace{5mm}
    \hbox to\hsize{
      \hspace{2mm}
      \(\displaystyle\inference{trud}{R:{\sf bool}}{R\vdash\top}\)
      \hfill
      \(\displaystyle\inference{refl}{A:\alpha}{\top\vdash A = A}\)
      \hfill
      \(\displaystyle\inference{eqmp}{R\vdash A\quad R\vdash A = B}
          {R\vdash B}\)
      \hspace{2mm}
    }
    \vspace{5mm}
    \hbox to\hsize{
      \hspace{2mm}
      \(\displaystyle\inference{ded}{(R,S)\vdash T\quad (R,T)\vdash S}
          {R\vdash S = T}\)
      \hfill
      \(\displaystyle\inference{ceq}
          {F:\alpha\to\beta\quad A:\alpha\quad R\vdash F=G\quad R\vdash A=B}
          {R\vdash F\;A = G\;B}\)
      \hspace{2mm}
    }
    \vspace{5mm}
    \hbox to\hsize{
      \hspace{2mm}
      \(\displaystyle\inference{leq\footnotemark[1]}
          {R\vdash A=B}
          {R\vdash (\lambda x^\alpha.\ A)=(\lambda x^\alpha.\ B)}\)
      \hfill
      \(\displaystyle\inference{hbl1}{A:\gamma\quad B:\alpha}
          {\top\vdash(\lambda x^\alpha.\ \lambda x^\beta.\ A)\;B =
            (\lambda x^\beta.\ A)}\)
      \hspace{2mm}
    }
    \vspace{5mm}
    \hbox to\hsize{
      \hspace{2mm}
      \(\displaystyle\inference{distrc}{B:\alpha\quad F:\beta\to\gamma}
          {\top\vdash(\lambda x^\alpha.\ (F\;A))\;B =
            ((\lambda x^\alpha.\ F)\;B)\;((\lambda x^\alpha.\ A)\;B)}\)
      \hfill
      \(\displaystyle\inference{ax-17\footnotemark[2]}{A:\beta\quad B:\alpha}
          {\top\vdash(\lambda x^\alpha.\ A)\;B = A}\)
      \hspace{2mm}
    }
    \vspace{5mm}
    \hbox to\hsize{
      \hspace{2mm}
      \(\displaystyle\inference{distrl\footnotemark[3]}{A:\gamma\quad B:\alpha}
          {\top\vdash(\lambda x^\alpha.\ \lambda y^\beta.\ A)\;B =
            \lambda y^\beta.\ ((\lambda x^\alpha.\ A)\;B)}\)
      \hfill
      \(\displaystyle\inference{beta}{A:\beta}
          {\top\vdash(\lambda x^\alpha.\ A)\;x^\alpha = A}\)
      \hspace{2mm}
    }
    \vspace{5mm}
    \hbox to\hsize{
      \hspace{2mm}
      \(\displaystyle\inference{inst}{
           \begin{array}{c}
             \top\vdash(\lambda x^\alpha.\ B)\;y^\alpha = B\\
             \top\vdash(\lambda x^\alpha.\ S)\;y^\alpha = S\end{array}\quad
           \begin{array}{c}
             x^\alpha = C\vdash A = B\\
             x^\alpha = C\vdash R = S\end{array}
           \quad R\vdash A}{S\vdash B}\)
      \hspace{2mm}
    }
    \vspace{2mm}
  }
}
\end{center}
\caption{The \texttt{hol.mm} axiomatization.}
\label{fig:holmmaxioms}
\end{figure}
\footnotetext[1]{The variable $x$ is required to not be present in the (expression substituted for) $R$.}
\footnotetext[2]{The variable $x$ is required to not be present in $A$.}
\footnotetext[3]{The variable $y$ is required to be distinct from $x$ and not present in $B$.}
}

Figure~\ref{fig:holmmaxioms} shows the axiomatization that is used in \texttt{hol.mm}.\footnote{A version of \texttt{hol.mm} is available for download at \url{http://us.metamath.org/metamath/hol.mm}.} A comparison with the axioms of OpenTheory (figure~\ref{fig:otaxioms}) shows several differences. The most obvious difference is the increase in the number of axioms, from 14 to 23, but this comparison is deceptive, because this increase must be weighed against the more complicated explanation of $\alpha$-equivalence and proper substitution that must be described in order to fully explain when the axioms are applicable. In the \texttt{hol.mm} axiomatization, the only ``asterisks'' are the distinct variable provisos that come with \textsf{leq}, \textsf{ax-17}, and \textsf{distrl}.

There are three kinds of variables in these axioms: ``type'', ``var'', and ``term'' variables. Term variables are represented by capital letters, type variables are represented with greek letters, and ``var'' variables (referred to henceforth as \emph{vars}) are represented by lowercase letters. The reason for the division of OpenTheory term variables into two different types has to do with the way which Metamath handles substitutions. Metamath is designed to have ``pluggable'' axioms, with the only built-in axiom being the direct substitution of a (meta)variable with a term. This substitution process happens for every application of every theorem, and the only restriction on substitutions is that the substitution must be with a term, and the distinct variable provisos must be honored.

Unlike OpenTheory, vars in \texttt{hol.mm} do not come with built-in type information, so a variable term takes a var as well as a type; this is represented as a superscript in figure~\ref{fig:holmmaxioms} (e.g. $x^\alpha$). Metamath often refers to its variables as ``metavariables'', because they can be interpreted as variables which stand in for expressions of their type in some lower object language. Note that even vars like $x$ are metavariables in this sense, but they can only be substituted with other vars (because the only expressions of type ``var'' are other vars). The capital term variables, on the other hand, can be substituted with other terms, like $x^\alpha$ or $(\lambda x^\alpha.\ A)$. In Metamath, it is easier to work with term variables than vars because they admit direct substitution, but only vars can be bound in constructs like lambda abstractions, so both variable types are necessary in the Metamath interpretation. However, since this conflicts with our terminology we will stick to calling these variables, and referring to variables ranging over Metamath expressions as ``metavariables''. For simplicity of presentation, we will stick to using only vars during a derivation and keep the term variables to just the axioms themselves.

Most of the axioms have direct equivalents, but a few deserve extra explanation. The axioms \textsf{wtru}, \textsf{wct}, \textsf{id}, \textsf{syl}, \textsf{jca}, \textsf{simpl}, \textsf{simpr}, \textsf{trud} are necessary in order to handle context manipulation: doing the set unions in OpenTheory axioms like \textsf{eqMp} (this is discussed in more detail in section~\ref{sec:provingot}). The axioms \textsf{hbl1}, \textsf{distrc}, \textsf{distrl}, \textsf{ax-17} are used to define the properties of the ``not free in'' predicate, which is used in \textsf{inst}.

\begin{definition}
\label{def:nf}
  Let $\NF(x^\alpha,A)$ denote the statement that $\top\vdash(\lambda x^\alpha.\ A)\;y^\alpha = A$ is derivable (with explicit metavariable $y$). We can read this as ``$x$ is not free in $A$''.
\end{definition}

Then \textsf{ax-17} asserts that $\NF(x^\alpha,A)$ whenever $A$ is a term that is distinct from $x$, and \textsf{hbl1} asserts that $\NF(x^\alpha,(\lambda x^\beta.\ A))$, while \textsf{distrc} and \textsf{distrl} can be used to show that $\NF(x^\alpha,A)$ and $\NF(x^\alpha,B)$ imply $\NF(x^\alpha,A\;B)$ and $\NF(x^\alpha,(\lambda y^\beta.\ A))$ (when $x$ and $y$ are distinct). Thus this predicate allows one to represent the structural property ``$x$ is not free in $A$'' faithfully within the logic. (This same trick is used in \texttt{set.mm} to represent the not-free predicate, although there it is expressed more naturally as $(\varphi\to\forall x\,\varphi)$. The name \textsf{ax-17} is borrowed from \texttt{set.mm}, where an axiom by the same name asserts $(\varphi\to\forall x\,\varphi)$ when $x$ is distinct from $\varphi$.)

The three additional axioms in HOL are translated to the following three axioms:

\begin{describe}{\textsf{eta}}
\item[\textsf{eta}] $\top\vdash\forall f^{\alpha\to\beta}.\ (\lambda x^\alpha.\ f^{\alpha\to\beta}\; x^\alpha)=f^{\alpha\to\beta}$
\item[\textsf{ac}] $\top\vdash\forall p^{\alpha\to{\sf bool}}.\ \forall x^\alpha.\ (p^{\alpha\to{\sf bool}}\;x^\alpha\Rightarrow p^{\alpha\to{\sf bool}}\;(\epsilon p^{\alpha\to{\sf bool}}))$
\item[\textsf{inf}] $\top\vdash\exists f^{\sf ind\to ind}.\ ({\sf injective}\ f^{\sf ind\to ind}\wedge \neg{\sf surjective}\ f^{\sf ind\to ind})$
\end{describe}

These axioms are exactly what you would expect from the earlier listing on page~\pageref{dsc:otextras}. Also, the following simple theorems of the system will be useful:

$$\inference{a1i}{\top\vdash A}{R\vdash A}$$
$$\inference{eqcomi}{R\vdash A=B}{R\vdash B=A}$$
$$\inference{eqtri}{R\vdash A=B\quad R\vdash B=C}{R\vdash A=C}$$
$$\inference{cteq}{A:{\sf bool}\quad C:{\sf bool}\quad R\vdash A=B\quad R\vdash C=D}{R\vdash(A,C)=(B,D)}$$
$$\inference{hbct}{\NF(x^\alpha,A)\quad \NF(x^\alpha,B)}{\NF(x^\alpha,(A,B))}$$
$$\inference{cbv}{x^\alpha=y^\alpha\vdash A=B}{\top\vdash (\lambda x^\alpha.\ A)=(\lambda y^\alpha.\ B)}$$

In the last theorem, $y$ must not be present in $A$ and $x$ must not be present in $B$. This theorem is curious because although it is true using only OpenTheory core theorems (trivially, since $(\lambda x^\alpha.\ A)$ and $(\lambda y^\alpha.\ B)$ are $\alpha$-equivalent), it (provably) requires \textsf{eta} for its \texttt{hol.mm} proof.

\subsection{Language embedding}
\label{sec:embedding}

Our first step in defining the conversion from OpenTheory to \texttt{hol.mm} is to define the embedding of formulas in one language into formulas in the other. We denote this function as $\cal F$. 

\begin{definition}
  The map $\cal F$ operates on statements, terms, and types of the OpenTheory language, and outputs statements, terms, and types of the \texttt{hol.mm} language.
  \begin{longitem}
    \item For a constant term or type $c$, ${\cal F}(c)=c$, where a constant named $c$ is defined in the file if it is not already present.
    \item For a variable term $v$ of type $\alpha$, ${\cal F}(v)=v^{{\cal F}(\alpha)}$ and ${\cal F}(\lambda v.\ x)=(\lambda v^{{\cal F}(\alpha)}.\ {\cal F}(x))$, where a var named $v$ is defined in the file if it is not already present.
    \item For a type variable $\alpha$, ${\cal F}(\alpha)=\alpha$, where a type variable named $\alpha$ is defined in the file if it is not already present.
    \item ${\cal F}(f\;x)={\cal F}(f)\;{\cal F}(x)$
    \item ${\cal F}(t:\alpha)={\cal F}(t):{\cal F}(\alpha)$
    \item ${\cal F}(\alpha\to\beta)={\cal F}(\alpha)\to{\cal F}(\beta)$
    \item If $A[\alpha_1,\dots,\alpha_n]$ is a type operator of arity $n$, then ${\cal F}(A[\alpha_1,\dots,\alpha_n]) =$\\$A[{\cal F}(\alpha_1),\dots,{\cal F}(\alpha_n)],$ and $A[\beta_1,\dots,\beta_n]$, with literal type variables $\beta_1,\dots,\beta_n$, is added as a syntax constructor if it is not already present (and $\beta_1,\dots,\beta_n$ are added as type variables if not present).
    \item ${\cal F}(\{\phi_1,\phi_2,\dots,\phi_n\}\vdash\psi)=(({\cal F}(\phi_1),{\cal F}(\phi_2)),\dots,{\cal F}(\phi_n))\vdash{\cal F}(\psi)$, unless $n=0$ in which case ${\cal F}(\vdash\psi)=\top\vdash{\cal F}(\psi)$.
  \end{longitem}
\end{definition}

\begin{remark}
  Since sets are unordered but the context conjunction is, ${\cal F}(\Gamma\vdash\phi)$ is not uniquely defined. However, any of the choices of ordering of $\Gamma$ produce equivalent statements, and internally $\Gamma$ is usually stored as a list anyway, so one may as well use this ordering. Alternatively, one can define a total order on terms and insist that the listing be done in increasing order to ensure uniqueness. In any case, the ordering chosen will not be relevant to later developments.
\end{remark}

In the course of ``evaluating'' this function on the statements of an OpenTheory derivation, at various points certain variables and constants will be added to the logical system. This is necessary because all variable and constant names need to be predeclared in a Metamath file, so this ensures that the predefinitions are made and allows an OpenTheory file to use variables that may not have been defined in the \texttt{hol.mm} core (which only defines variables that are used in the axioms themselves, such as $x,y,A,B,\alpha,\beta$. Any other variable names, like $v$, will need to be declared before their use in a theorem).

We assume that all term variable names in an OpenTheory derivation are distinct from type variable names and \texttt{hol.mm} core axiom and theorem labels, and no variable is used multiple times in the same theorem statement with different types, because OpenTheory will consider these distinct while \texttt{hol.mm} will consider them the same (i.e. $\NF(x^\beta,(\lambda x^\alpha.\ x^\beta))$ even though OpenTheory would consider $x^\beta$ as free in that expression). This can be ensured with suitable preprocessing of the OpenTheory article file.

Now we are finally capable of stating the main goal of this part, although the proof will be postponed to the next section.

\begin{theorem}
\label{thm:hol-mm}
If the statement $\Gamma\vdash\phi$ is derivable in OpenTheory, then ${\cal F}(\Gamma\vdash\phi)$ is derivable in a conservative extension of {\normalfont\texttt{hol.mm}}.
\end{theorem}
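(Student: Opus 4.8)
The plan is to induct on the structure of the OpenTheory derivation of $\Gamma\vdash\phi$: for each kernel rule of Figure~\ref{fig:otaxioms} (together with \textsf{axiom} and \textsf{thm}) I would exhibit a \texttt{hol.mm} derivation of the ${\cal F}$-image of the conclusion from the ${\cal F}$-images of the hypotheses, possibly after extending the database with fresh variable, constant, or type-operator declarations and — for \textsf{defineConst} and \textsf{defineTypeOp} — with the corresponding definitional axioms. The term- and type-formation rules \textsf{varTerm}, \textsf{absTerm}, \textsf{appTerm} are immediate, mapping to \textsf{wv}, \textsf{wl}, \textsf{wc}, with the clauses of ${\cal F}$ for $\to$ and type operators guaranteeing that ${\cal F}$ sends well-formed OpenTheory types and terms to well-formed \texttt{hol.mm} ones. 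The extensions made along the way are conservative: adjoining a fresh variable, constant, or type-variable name never affects derivability of statements not mentioning it, and the two definitional schemes are conservative by the standard argument — for \textsf{defineConst}, $c$ is new and $\vdash c = t$ is eliminable; for \textsf{defineTypeOp}, the hypothesis $\vdash\phi\;t$ furnishes the inhabitant of the new type that makes its bijection axioms conservative.

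Before the rule-by-rule work I would isolate three lemmas that absorb the genuine mismatches between the systems. \emph{(Context lemma.)} Weakening, permutation, and contraction of the hypothesis list are admissible in \texttt{hol.mm}: from $R\vdash T$ one obtains $S\vdash T$ whenever $S$, read through its context-conjunction structure, contains the conjuncts of $R$, by combining \textsf{id}, \textsf{simpl}, \textsf{simpr}, \textsf{jca}, \textsf{syl}, \textsf{trud}. Hence ${\cal F}(\Gamma\vdash\phi)$ is, up to interderivability, independent of the ordering of $\Gamma$, and the set operations $\Gamma\cup\Delta$ and $\Gamma-\{\psi\}$ in \textsf{eqMp}, \textsf{appThm}, \textsf{absThm}, \textsf{deductAntisym} become rearrangements of context conjunctions. \emph{($\alpha$-equivalence lemma.)} If OpenTheory terms $t,u$ are $\alpha$-equivalent then $\top\vdash{\cal F}(t)={\cal F}(u)$ is derivable, by induction on $\alpha$-equivalence with respect to $\sigma$, the abstraction clause using \textsf{cbv} (whence \textsf{eta}); this discharges the side conditions on $\phi'$ in \textsf{eqMp} and on the inputs to \textsf{thm}. \emph{(Substitution lemma.)} ${\cal F}$ commutes with proper substitution up to provable equality: for a substitution $\sigma$ there is a \texttt{hol.mm} term obtained from ${\cal F}(t)$ by a cascade of lambda-applications — the ``substitution redex'' — that is provably equal to ${\cal F}(t[\sigma])$, so that a single \textsf{inst} transfers $\vdash$-statements across it.

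Granting these, the kernel rules go through with little friction. \textsf{refl} and \textsf{assume} are \textsf{refl}/\textsf{id} after reconciling $\top$ with a nonempty context via the context lemma; \textsf{eqMp} is \textsf{eqmp} preceded by the $\alpha$-equivalence lemma and context merging; \textsf{appThm} is \textsf{ceq} with context merging; \textsf{absThm} is \textsf{leq}, whose proviso ``$x$ not in $R$'' matches ``$v$ not free in $\Gamma$'' once the $\alpha$-equivalence lemma moves $\Gamma$ to a $v$-free representative; \textsf{deductAntisym} is \textsf{ded} plus the context lemma for the set differences. \textsf{betaConv} — full $\beta$-reduction $(\lambda v.\ t)\;u = t[v\to u]$ — needs an inner induction on $t$ that drives the application inward with \textsf{distrc} and \textsf{distrl}, bottoming out at \textsf{beta} for the bound variable and \textsf{ax-17}/\textsf{hbl1} for variables and constants unequal to $v$ and for inner abstractions, and matching the dummy-renaming case of proper substitution through \textsf{cbv}. \textsf{subst} splits into its type part (free, since type substitution is direct and so handled by Metamath's built-in substitution) and its term part (iterated substitution lemma and \textsf{inst}, with the $\NF$ side conditions of \textsf{inst} supplied by \textsf{ax-17}, \textsf{hbl1}, \textsf{hbct}, and the \textsf{distrc}/\textsf{distrl}-based closure properties of Definition~\ref{def:nf}). \textsf{defineConst}, \textsf{defineTypeOp} add their axioms to the database as above, and \textsf{axiom}/\textsf{thm} are bookkeeping modulo $\alpha$-equivalence.

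The main obstacle is the substitution lemma and, entangled with it, \textsf{betaConv}. OpenTheory's proper substitution is a \emph{simultaneous}, capture-avoiding replacement that invents dummies on the fly, whereas \texttt{hol.mm} offers only the single-variable \textsf{inst}, gated by structurally-proved $\NF$ facts, together with a ``substitution redex'' that must be assembled by hand. Making the induction close requires (i) choosing the cascade of lambda-applications so that after reduction by \textsf{distrc}/\textsf{distrl}/\textsf{ax-17}/\textsf{beta} it literally produces ${\cal F}(t[\sigma])$, which forces the dummies that \texttt{hol.mm} must introduce to agree — up to \textsf{cbv}-provable equality — with those OpenTheory chooses, and (ii) threading the $\NF$ hypotheses correctly through nested abstractions, which is precisely the job of the otherwise-redundant-looking \textsf{hbl1}, \textsf{distrc}, \textsf{distrl}. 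The standing normalization assumption (no variable name reused at two types, and names disjoint from core labels) is what keeps this bookkeeping finite and the dummy choices unambiguous.
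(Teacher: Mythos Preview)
Your proposal is correct and follows essentially the same architecture as the paper: the same three auxiliary lemmas (context manipulation via \textsf{id}/\textsf{simpl}/\textsf{simpr}/\textsf{jca}/\textsf{syl}, provable $\alpha$-equivalence via \textsf{cbv}, and a structural proper-substitution lemma yielding both the equality and the $\NF$ side condition), and the same rule-by-rule discharge, including the reduction of \textsf{betaConv} to \textsf{beta} plus the substitution machinery and the handling of \textsf{absThm} by first moving $\Gamma$ to a $v$-free $\alpha$-equivalent representative.

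The one genuine organizational difference is how the induction hypothesis is packaged. You induct directly on ``${\cal F}(\Gamma\vdash\phi)$ is derivable'' and treat the type half of \textsf{subst} as free, appealing to the Metamath meta-theorem that any derivation can be uniformly instantiated at its type metavariables. The paper instead strengthens the invariant: it defines a notion of \emph{reduction} (a \texttt{hol.mm} statement $A\vdash B$ reduces to $\Gamma\vdash\phi$ when, up to a type substitution $\sigma$ and $\alpha$-equivalence, $B$ matches $\phi$ and every member of $\Gamma$ occurs as a conjunct of $A$) and proves the more general claim that any $A\vdash B$ reducing to a derivable $\Gamma\vdash\phi$ is itself derivable. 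This buys two things: the type-substitution case of \textsf{subst} becomes a one-liner inside the induction (if $A\vdash B$ reduces to $\Gamma[\sigma]\vdash\phi[\sigma]$ then it already reduces to $\Gamma\vdash\phi$), with no appeal to meta-level substitution over whole derivations; and the context/$\alpha$-equivalence slack is carried in the hypothesis rather than reasserted at every step. Your route is equally valid but leans on the admissibility of global metavariable instantiation in Metamath, which you should state explicitly and justify (in particular, that the distinct-variable provisos of \textsf{leq}, \textsf{ax-17}, \textsf{distrl} involve only vars and term expressions, hence are preserved under type substitution).
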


The reason for the ``conservative extension'' caveat is because in addition to the variables and constants being added to the system by $\cal F$, our transformation will also need to add definitions coming from \textsf{defineConst} and \textsf{defineTypeOp}, and Metamath does not support a special definition construct. Instead, definitions and axioms are treated on equal footing, and an external tool can be used to show that the axioms that claim to be definitions are actually conservative.

\subsection{Proving the embedded OpenTheory axioms}
\label{sec:provingot}

Our proof of theorem \ref{thm:hol-mm} will proceed by defining an explicit map from OpenTheory derivations to \texttt{hol.mm} derivations. First, we show that proper substitution, simplification, and $\alpha$-equivalence are derivable.

\begin{lemma}
\label{thm:simp}
If $A$ is a nested context conjunction containing $B$ as a conjunct, then $A\vdash B$ is provable.
\end{lemma}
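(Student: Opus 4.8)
The plan is to prove this by structural induction on the nested context conjunction $A$, treating $A$ as a binary tree built from the context conjunction constructor $(\cdot,\cdot)$ whose leaves are arbitrary $\sf bool$-terms. The statement "$A$ contains $B$ as a conjunct" should be read as: $B$ occurs as one of the leaves of this tree, or $A$ is itself equal to $B$ (the degenerate one-node tree). The induction hypothesis will be that for any proper subtree $A'$ of $A$ containing $B$ as a conjunct, $A' \vdash B$ is provable.

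First I would handle the base case, where $A = B$ itself (i.e.\ $A$ is not a context conjunction, or we simply stop descending). Here $A \vdash B$ is just $B \vdash B$, which is the axiom \textsf{id} applied to the well-formedness hypothesis $B:{\sf bool}$ — note we are implicitly assuming all the terms in sight are of type $\sf bool$, which is part of what "context conjunction" means. For the inductive step, $A = (S, T)$ for some $\sf bool$-terms $S, T$, and $B$ is a conjunct of either $S$ or $T$. By symmetry consider the case that $B$ is a conjunct of $S$. If $S = B$, then $A \vdash B$ is exactly \textsf{simpl} (with $R := S$, $S := T$ in the notation of that axiom). Otherwise $S$ is a proper subtree, so by the induction hypothesis $S \vdash B$ is provable; combining this with \textsf{simpl}, which gives $(S,T) \vdash S$, via \textsf{syl} yields $(S,T) \vdash B$, i.e.\ $A \vdash B$. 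The case where $B$ is a conjunct of $T$ is identical, using \textsf{simpr} in place of \textsf{simpl}.

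I do not expect any real obstacle here — the lemma is essentially the observation that \textsf{simpl}, \textsf{simpr}, and \textsf{syl} let one project out any leaf of a conjunction tree, and \textsf{id} covers the trivial tree. The only point requiring a little care is bookkeeping of the well-formedness side conditions: each application of \textsf{id}, \textsf{simpl}, \textsf{simpr} carries hypotheses of the form $R:{\sf bool}$, $S:{\sf bool}$, and these are all discharged by the standing assumption that $A$ (hence every subterm appearing as a node or leaf of its conjunction-tree) is a well-formed term of type $\sf bool$, which can be established by repeated use of \textsf{wct} and \textsf{wtru}. I would state the induction cleanly in terms of the conjunction-tree structure to make the "contains $B$ as a conjunct" relation precise, and then the three-line case analysis above completes it.
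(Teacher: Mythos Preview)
Your proposal is correct and follows essentially the same approach as the paper: structural induction on $A$, with \textsf{id} for the base case $A=B$, and \textsf{simpl}/\textsf{simpr} composed with the induction hypothesis via \textsf{syl} for the inductive step $A=(A_1,A_2)$. Your separate treatment of the sub-case $S=B$ is harmless but unnecessary, since it is already covered by the induction hypothesis (the base case applied to $S$) followed by \textsf{syl}; the paper simply folds this into the general inductive argument.
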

\begin{proof}
By induction on the length of $A$. If $A=B$, then \textsf{id} proves $A\vdash B$. If $A=(A_1,A_2)$ and $B$ is a conjunct of $A_1$, then \textsf{simpl} proves $A\vdash A_1$ and by induction $A_1\vdash B$ is provable, so \textsf{syl} proves $A\vdash B$. The case of $B$ a conjunct of $A_2$ is similar (using \textsf{simpr} instead).
\end{proof}

\begin{lemma}
\label{thm:eqthm}
If $A$ does not contain $x$ and $C=B[A/x]$ is the result of the proper substitution of $A$ for $x^\alpha$ in $B$ (where proper substitution of {\normalfont\texttt{hol.mm}} terms is defined similarly to OpenTheory proper substitution), then $x^\alpha=A\vdash B=C$ is provable, and $\NF(x^\alpha,C)$.
\end{lemma}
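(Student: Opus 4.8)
The plan is to prove Lemma~\ref{thm:eqthm} by induction on the size of $B$ (its number of term constructors), splitting on the outermost form of $B$, and carrying along the auxiliary invariant that $x$ does not occur \emph{syntactically} in $C$. This invariant is easy to maintain---proper substitution erases every free occurrence of $x^\alpha$, replacing it by (a copy of) $A$, which contains no $x$, and it renames away every binder of the form $(\lambda x^\alpha.\ {\cdot})$; here I lean on the standing assumption that no var is used at two different types in one theorem, so that no stray $x^\beta$ with $\beta\ne\alpha$ can survive---and once it holds, $\NF(x^\alpha,C)$ follows immediately from \textsf{ax-17}. Size, rather than structure, is the right induction measure because in the abstraction case one recursive appeal is to a term of the shape $T[v^\beta\mapsto w^\beta]$, which is not a literal subterm of $B$ but does have the same size as $T$.

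For the base cases: if $B$ is a constant, or a variable term different from $x^\alpha$, then $C=B$, so $x^\alpha=A\vdash B=B$ follows from \textsf{refl} and \textsf{a1i}, and $x\notin C$; if $B=x^\alpha$ then $C=A$ and $x^\alpha=A\vdash x^\alpha=A$ is exactly \textsf{id}, with $x\notin A$ by hypothesis. For $B=F\;T$ we have $C=F[A/x]\;T[A/x]$; apply the induction hypothesis to $F$ and to $T$ and combine the resulting equalities with \textsf{ceq}. The context-conjunction case $B=(R,S)$ is identical using \textsf{cteq}. In both, the invariant $x\notin C$ propagates from the subterms.

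The real work is the abstraction case $B=(\lambda v^\beta.\ T)$. If $x^\alpha$ does not occur in $B$ then $C=B$ and we finish as in a base case. Otherwise there are two sub-cases following the proper-substitution definition. If $v^\beta=x^\alpha$, then $C=(\lambda w^\alpha.\ T[x^\alpha\mapsto w^\alpha])$ for a fresh $w$; applying the induction hypothesis to $T$ with the renaming substitution $[x^\alpha\mapsto w^\alpha]$ (legitimate since $w\ne x$) gives $x^\alpha=w^\alpha\vdash T=T[x^\alpha\mapsto w^\alpha]$ together with $x\notin T[x^\alpha\mapsto w^\alpha]$, and then \textsf{cbv}---whose provisos ``$w$ not present in $T$'' and ``$x$ not present in $T[x^\alpha\mapsto w^\alpha]$'' are met---yields $\top\vdash(\lambda x^\alpha.\ T)=(\lambda w^\alpha.\ T[x^\alpha\mapsto w^\alpha])$, whence $x^\alpha=A\vdash B=C$ by \textsf{a1i}. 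If $v^\beta\ne x^\alpha$ (so $x^\alpha$ occurs free in $T$), then $C=(\lambda w^\beta.\ T[v^\beta\mapsto w^\beta][x^\alpha\mapsto A])$ for a fresh $w$; first, using the induction hypothesis on $T$ with $[v^\beta\mapsto w^\beta]$ followed by \textsf{cbv}, derive $\top\vdash(\lambda v^\beta.\ T)=(\lambda w^\beta.\ T[v^\beta\mapsto w^\beta])$; second, apply the induction hypothesis to $T[v^\beta\mapsto w^\beta]$ (same size as $T$, hence smaller than $B$) with $[x^\alpha\mapsto A]$ to get $x^\alpha=A\vdash T[v^\beta\mapsto w^\beta]=T[v^\beta\mapsto w^\beta][x^\alpha\mapsto A]$, and push this through \textsf{leq} to abstract over $w^\beta$; chaining the two with \textsf{eqtri} (after \textsf{a1i} on the first) yields $x^\alpha=A\vdash B=C$, and the invariant again follows by inspection.

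I expect the main obstacle to be discharging the distinct-variable provisos in this last sub-case: \textsf{cbv} and especially \textsf{leq} require the dummy $w$ to be absent not merely from $T$ but from the whole context $x^\alpha=A$, i.e.\ $w$ must be chosen fresh with respect to $A$ and $x$ as well as the subterms of $B$. This is slightly stronger than the bare proper-substitution definition demands, so the proof relies on taking the capture-avoiding refinement of that definition---which we may, since in the translation we control the choice of dummies and the final OpenTheory statement is insensitive to bound-variable names---and keeping this bookkeeping straight, while also checking that the ``$x$ not present'' invariant survives each renaming step, is where the care lies. Everything else reduces to routine applications of the congruence rules \textsf{ceq}, \textsf{cteq}, \textsf{leq} together with \textsf{refl}, \textsf{id}, \textsf{a1i}, \textsf{eqtri}, and \textsf{ax-17}.
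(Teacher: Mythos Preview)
Your argument is correct, but it takes a different route from the paper's. The paper also inducts on the length of $B$, but it does \emph{not} maintain the syntactic invariant ``$x$ is absent from $C$''; instead it proves $\NF(x^\alpha,C)$ compositionally, using \textsf{hbl1}, \textsf{distrc}, \textsf{distrl}, and \textsf{hbct} alongside the equality derivation. Consequently, in the case $B=(\lambda x^\alpha.\ B_1)$ the paper simply takes $C=B$ (no renaming) and obtains $\NF$ from \textsf{hbl1}; and in the case $B=(\lambda y^\beta.\ B_1)$ with $x\ne y$ it takes $C=(\lambda y^\beta.\ B_1[A/x])$ and invokes \textsf{leq} directly, introducing a dummy $z$ via \textsf{cbv} only when $y$ actually occurs in $A$. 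Your version, by contrast, \emph{always} renames under a binder so as to keep $x$ out of $C$ entirely, which lets you dispatch $\NF(x^\alpha,C)$ in one stroke with \textsf{ax-17}.

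Each approach buys something. Yours is pleasantly uniform on the $\NF$ side and avoids ever touching \textsf{hbl1}/\textsf{distrc}/\textsf{distrl}/\textsf{hbct}, at the cost of stronger freshness bookkeeping (the dummy must avoid $x$ and $A$, not just the subterms of $B$, as you correctly flag) and a result term $C$ that may differ by bound renaming from the more economical one the paper produces. The paper's approach exercises the $\NF$ infrastructure as it was designed and renames only when capture forces it, yielding a $C$ closer to $B$; but its $\NF$ argument is an extra structural induction running in parallel. Both are sound readings of ``proper substitution defined similarly to OpenTheory'', and both feed equally well into the later uses of the lemma.
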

\begin{proof}
By induction on the length of $B$.
\begin{longitem}
\item If $B$ does not contain $x^\alpha$, then $C$ is identical to $B$ and so \textsf{refl}, \textsf{a1i} proves $x^\alpha=A\vdash B=B$ and \textsf{ax-17} proves $\NF(x^\alpha,B)$.
\item If $B=x^\alpha$, then $C=A$ so \textsf{id} proves $x^\alpha=A\vdash x^\alpha=A$ and \textsf{ax-17} proves $\NF(x^\alpha,A)$.
\item If $B=B_1\;B_2$, then $C=B_1[A/x]\;B_2[A/x]$, so the induction hypothesis gives proofs of $x^\alpha=A\vdash B_1=B_1[A/x]$ and $x^\alpha=A\vdash B_2=B_2[A/x]$, and \textsf{ceq} proves $x^\alpha=A\vdash B=C$ and \textsf{distrc}, \textsf{ceq} prove $\NF(x^\alpha,C)$ from $\NF(x^\alpha,B_1[A/x])$ and $\NF(x^\alpha,B_2[A/x])$. (The same is true when $B=(B_1,B_2)$, with \textsf{cteq} in place of \textsf{ceq} and \textsf{hbct} for the proof of  $\NF(x^\alpha,C)$.)
\item If $B=(\lambda x^\alpha.\ B_1)$, then $\NF(x^\alpha,B)$, so $C$ is identical to $B$ and so again \textsf{refl}, \textsf{a1i} proves $x^\alpha=A\vdash B=B$ and  $\NF(x^\alpha,B)$ is given.
\item If $B=(\lambda y^\beta.\ B_1)$ where $x,y$ are distinct, then $C=(\lambda y^\beta.\ B_1[A/x])$ and \textsf{distrl}, \textsf{leq}, \textsf{eqtri} proves $\NF(x^\alpha,C)$. If $A$ does not contain $y$, then \textsf{leq} proves the goal. If $A$ does contain $y$, then \textsf{leq} is not directly applicable, and the proper substitution for $C=(\lambda z^\beta.\ C_1)$ includes a dummy variable $z$. In this case, use the induction hypothesis to prove $y^\beta=z^\beta\vdash B_1=B_1[z/y]$, and then apply \textsf{cbv}, \textsf{a1i} to get $x^\alpha=A\vdash B=(\lambda z^\beta.\ B_1[z/y])$. Using the induction hypothesis once more to prove $x^\alpha=A\vdash B_1[z/y]=C_1$, \textsf{leq} gives $x^\alpha=A\vdash (\lambda z^\beta.\ B_1[z/y])=C$ and \textsf{eqtri} proves the goal theorem. 
\end{longitem}
\end{proof}

\begin{lemma}
\label{thm:alphaeq}
If $A$ is $\alpha$-equivalent to $B$ (where $\alpha$-equivalence of {\normalfont\texttt{hol.mm}} terms is defined similarly to OpenTheory $\alpha$-equivalence), then $\top\vdash A=B$ is provable.
\end{lemma}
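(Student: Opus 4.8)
The plan is to prove a strengthened statement by induction, since the recursion in the definition of $\alpha$-equivalence passes through the auxiliary map $\sigma$. Concretely, I would show: if $A$ is $\alpha$-equivalent to $B$ with respect to a finite map $\sigma=\{x_1^{\alpha_1}\mapsto y_1^{\alpha_1},\dots,x_n^{\alpha_n}\mapsto y_n^{\alpha_n}\}$ of variables to variables, then $\Gamma_\sigma\vdash A=B$ is provable, where $\Gamma_\sigma$ is the nested context conjunction $(x_1^{\alpha_1}=y_1^{\alpha_1},\dots,x_n^{\alpha_n}=y_n^{\alpha_n})$, read as $\top$ when $n=0$. Taking $\sigma$ empty then yields the lemma. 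The induction is strong induction on the size $|A|$ (number of term constructors in $A$); note also that whenever $A$ and $B$ are literally equal one can short-circuit with \textsf{refl} and \textsf{a1i}, so the only real work is when they differ.

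The atomic and compound non-binder cases are routine. If $A=B=c$ is a constant, or $A=B=v$ with $v\notin\mathrm{dom}(\sigma)$, then \textsf{refl} and \textsf{a1i} give $\Gamma_\sigma\vdash A=A$. If $A=v=x_i$ and $B=\sigma(v)=y_i$, then $\Gamma_\sigma$ contains $x_i^{\alpha_i}=y_i^{\alpha_i}$ as a conjunct, so Lemma~\ref{thm:simp} gives $\Gamma_\sigma\vdash A=B$ at once. If $A=f\;x$ and $B=g\;y$ with $f\equiv_\sigma g$ and $x\equiv_\sigma y$, the induction hypothesis (applied to the two smaller components, against the same $\sigma$) gives $\Gamma_\sigma\vdash f=g$ and $\Gamma_\sigma\vdash x=y$, and \textsf{ceq} combines them into $\Gamma_\sigma\vdash f\;x=g\;y$; the context-conjunction constructor $A=(A_1,A_2)$ is identical, with \textsf{cteq} in place of \textsf{ceq}.

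The binder case is the \emph{crux}. Suppose $A=(\lambda w^\beta.\;A_1)$, $B=(\lambda v^\beta.\;B_1)$ with $A_1\equiv_{\sigma[w\mapsto v]}B_1$. The naive move, applying \textsf{leq}, is blocked in general: \textsf{leq} needs the bound variable to be syntactically the same on both sides (so it fails when $w\ne v$) and absent from the context (which can fail under shadowing, e.g. when $w$ already occurs among the $x_i,y_i$). I would instead route through a fresh variable $z^\beta$ chosen to occur in none of $A_1$, $B_1$, $w$, $v$, or $\Gamma_\sigma$, available thanks to the standing assumption that variable names are suitably managed. Lemma~\ref{thm:eqthm} applied with substituted term $z^\beta$ gives $w^\beta=z^\beta\vdash A_1=A_1'$, where $A_1'=A_1[w\to z]$ is the proper substitution, together with $\NF(w^\beta,A_1')$; since moreover $z$ does not occur in $A_1$ and (being a proper substitution with domain $\{w\}$) $w$ does not occur syntactically in $A_1'$, the hypotheses of \textsf{cbv} are met and we obtain $\top\vdash(\lambda w^\beta.\;A_1)=(\lambda z^\beta.\;A_1')$, and symmetrically $\top\vdash(\lambda v^\beta.\;B_1)=(\lambda z^\beta.\;B_1')$ with $B_1'=B_1[v\to z]$. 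Renaming $w$ and $v$ to the common variable $z$ makes the two bodies $\alpha$-equivalent with respect to the smaller map $\sigma$ (the former $w$/$v$ positions now match under the identity; occurrences governed by $\sigma$ are untouched; and inner binders renamed to dummies by the two proper substitutions can be absorbed into the matching as usual), while $|A_1'|=|A_1|<|A|$, so the induction hypothesis yields $\Gamma_\sigma\vdash A_1'=B_1'$. Since $z$ is fresh for $\Gamma_\sigma$, \textsf{leq} now applies to give $\Gamma_\sigma\vdash(\lambda z^\beta.\;A_1')=(\lambda z^\beta.\;B_1')$, and three applications of \textsf{eqtri}---after weakening the two $\top$-equations into the context $\Gamma_\sigma$ via \textsf{a1i}---assemble $\Gamma_\sigma\vdash(\lambda w^\beta.\;A_1)=(\lambda v^\beta.\;B_1)$. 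As the text remarks about \textsf{cbv}, this detour is precisely where \textsf{eta} is used.

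The main obstacle is the bookkeeping in this binder case: checking that the fresh-variable renamings really do preserve $\alpha$-equivalence with respect to $\sigma$ (in particular that the independently chosen dummies introduced by the two proper substitutions can be matched under an extended map), and confirming that the size measure strictly decreases so the strong induction is well-founded. Everything else reduces to Lemmas~\ref{thm:simp} and~\ref{thm:eqthm} together with routine context manipulation using \textsf{id}, \textsf{simpl}, \textsf{simpr}, \textsf{jca}, \textsf{syl}, \textsf{a1i}, and \textsf{eqtri}.
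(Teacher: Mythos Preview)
Your approach is correct, but it is more elaborate than the paper's. The paper does \emph{not} strengthen the induction hypothesis to carry a context $\Gamma_\sigma$; it proves the bare statement $\top\vdash A=B$ by induction on the length of $A$, and in the binder case with differing bound variables it performs only \emph{one} renaming rather than two. Concretely, given $A=(\lambda x^\alpha.\,A_1)$ and $B=(\lambda y^\alpha.\,B_1)$, the paper sets $A_1'=A_1[y/x]$ and $A'=(\lambda y^\alpha.\,A_1')$, observes at the meta level that $A'$ is $\alpha$-equivalent to $B$ (by transitivity through $A$), and then falls back to the already-handled ``same bound variable'' case to get $\top\vdash A'=B$; Lemma~\ref{thm:eqthm} and \textsf{cbv} supply $\top\vdash A=A'$, and \textsf{eqtri} finishes.

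What your route buys is that the induction hypothesis is self-contained: you never appeal to meta-level closure properties of $\alpha$-equivalence (transitivity, or that $A_1[y/x]$ is $\alpha$-equivalent to $B_1$), because the mapping $\sigma$ is tracked explicitly inside the derivation as $\Gamma_\sigma$. The cost is the extra machinery---the fresh variable $z$, two invocations of \textsf{cbv}, and the bookkeeping you flag at the end. The paper's proof is shorter precisely because it offloads that bookkeeping to the ambient theory of $\alpha$-equivalence; your version would be preferable if one wanted a fully syntax-directed translation that does not rely on those meta-lemmas.
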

\begin{proof}
By induction on the length of $A$.
\begin{longitem}
\item If $A$ is a variable or constant, then $A=B$ and \textsf{refl} proves the goal.
\item If $A=A_1\;A_2$, then $B=B_1\;B_2$ and \textsf{ceq} proves the goal.
\item If $A=(A_1,A_2)$, then $B=(B_1,B_2)$ and \textsf{cteq} proves the goal.
\item If $A=(\lambda x^\alpha.\ A_1)$ and $B=(\lambda x^\alpha.\ B_1)$, then \textsf{leq} proves the goal.
\item If $A=(\lambda x^\alpha.\ A_1)$ and $B=(\lambda y^\alpha.\ B_1)$, then let $A_1'=A_1[y/x]$ and $A'=(\lambda y^\alpha.\ A_1')$. Then $A$ is $\alpha$-equivalent to $A'$ and $B$, so by the third clause $\top\vdash A'=B$ is provable, and by lemma~\ref{thm:eqthm} $x^\alpha=y^\alpha\vdash A_1=A_1'$ is provable, so \textsf{cbv} gives $\top\vdash A=A'$ and \textsf{eqtri} proves the goal.
\end{longitem}
\end{proof}

\begin{lemma}
\label{thm:ae-sub} If $A$ is $\alpha$-equivalent to $B$ and $C$ is $\alpha$-equivalent to $D$, then $A\vdash C$ implies $B\vdash D$.
\end{lemma}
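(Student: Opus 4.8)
The plan is to chain the two $\alpha$-equivalence facts together with the hypothesis $A\vdash C$ using the context-level rules \textsf{syl} and \textsf{eqmp}. The key tool is Lemma~\ref{thm:alphaeq}, which turns each $\alpha$-equivalence into a provable equality under the empty context $\top$; from there \textsf{a1i} imports these equalities into whatever context we need.

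First I would handle the left-hand side. Since $\alpha$-equivalence is symmetric (invert the witnessing variable map), $B$ is $\alpha$-equivalent to $A$, so Lemma~\ref{thm:alphaeq} gives $\top\vdash B=A$, and \textsf{a1i} lifts this to $B\vdash B=A$. Combining this with $B\vdash B$ (from \textsf{id}) via \textsf{eqmp} yields $B\vdash A$. Then \textsf{syl} applied to $B\vdash A$ and the hypothesis $A\vdash C$ gives $B\vdash C$.

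Next I would handle the right-hand side. Lemma~\ref{thm:alphaeq} applied to the $\alpha$-equivalence of $C$ and $D$ gives $\top\vdash C=D$, and \textsf{a1i} lifts it to $B\vdash C=D$. A final application of \textsf{eqmp} to $B\vdash C$ and $B\vdash C=D$ produces $B\vdash D$, which is the desired conclusion.

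There is essentially no obstacle here: the set-union and context bookkeeping that complicates the OpenTheory rules has already been absorbed into working with a single context term, so the only ingredients are Lemma~\ref{thm:alphaeq} together with the ``propositional'' rules \textsf{id}, \textsf{a1i}, \textsf{syl}, and \textsf{eqmp}. If one prefers not to invoke symmetry of $\alpha$-equivalence, the same effect is obtained by applying \textsf{eqcomi} to $\top\vdash A=B$ before the \textsf{a1i} step.
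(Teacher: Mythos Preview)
Your argument is correct and matches the paper's proof: both invoke Lemma~\ref{thm:alphaeq} twice and then use \textsf{id}, \textsf{a1i}, \textsf{eqcomi}, \textsf{eqmp}, and \textsf{syl} to chain $B\vdash A$, $A\vdash C$, and the equality $C=D$ into $B\vdash D$. The only cosmetic difference is that the paper first derives $C\vdash D$ and applies \textsf{syl}, whereas you import $C=D$ into the context $B$ via \textsf{a1i} and finish with \textsf{eqmp}; these are interchangeable uses of the same rules.
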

\begin{proof}
Lemma~\ref{thm:alphaeq} applied twice gives us $\top\vdash A=B$ and $\top\vdash C=D$, and \textsf{id}, \textsf{a1i}, \textsf{eqmp}, \textsf{eqcomi} turn these into $B\vdash A$ and $C\vdash D$, and then \textsf{syl} gives $B\vdash D$ as desired.
\end{proof}

In order to prove Theorem~\ref{thm:hol-mm}, we cast it as a special case of a more general theorem, using an invariant property which we'll call \textit{reduction}.

\begin{definition}
  Given an OpenTheory term $\phi$ and a \texttt{hol.mm} term $A$, we say that \textit{$A$ reduces to $\phi$} if there is a $\phi'$ $\alpha$-equivalent to $\phi$ with ${\cal F}(\phi')=A$, and given an OpenTheory statement $\Gamma\vdash\phi$ and a \texttt{hol.mm} statement $A\vdash B$, we say that \textit{$A\vdash B$ reduces to $\Gamma\vdash\phi$} if there is a \textit{type variable} substitution $\sigma$ such that $B$ reduces to $\phi[\sigma]$ and for every $\psi\in\Gamma$ either $A$ reduces to $\psi[\sigma]$ or $A=(A_1,A_2)$ and at least one of $A_1,A_2$ reduces to $\psi[\sigma]$.
\end{definition}

\begin{remark}
\label{rem:f-reduces}
  Note that ${\cal F}(\Gamma\vdash\phi)$ always reduces to $\Gamma\vdash\phi$.
\end{remark}

Intuitively, the notion of reduction from $A\vdash B$ to $\Gamma\vdash\phi$ means that $\Gamma$ is equivalent to a subset of the conjunction of terms in $A$, and $B$ and $\phi$ are equivalent. Thus if $\Gamma\vdash\phi$ is provable, then $A\vdash B$ has only added irrelevant antecedents, so it ought to be provable as well. This forms our main invariant across the derivation.

\begin{theorem}
\label{thm:reduction}
If $\Gamma\vdash\phi$ is derivable in OpenTheory and $A\vdash B$ reduces to $\Gamma\vdash\phi$, then $A\vdash B$ is derivable in a conservative extension of \normalfont{\texttt{hol.mm}}, and if $t:\alpha$ is derivable in OpenTheory, then ${\cal F}(t:\alpha)$ is derivable in a conservative extension of \normalfont{\texttt{hol.mm}}.
\end{theorem}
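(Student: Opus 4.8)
The plan is to prove Theorem~\ref{thm:reduction} by induction on the structure of the OpenTheory derivation, with a case split on the final inference rule (Figure~\ref{fig:otaxioms}, together with the \textsf{axiom} command, understood to supply one of the three HOL axioms \textsf{extensionality}, \textsf{choice}, \textsf{infinity}, and the \textsf{thm} command). The term-formation half of the statement is the easy part: \textsf{varTerm}, \textsf{absTerm}, \textsf{appTerm} map directly onto \textsf{wv}, \textsf{wl}, \textsf{wc}; the built-in constants $\top$, ${=}$, ${\to}$, ${\sf bool}$ are handled by \textsf{wtru}, \textsf{weq} and the corresponding \texttt{hol.mm} syntax; and any constant or type operator arising from an earlier \textsf{defineConst}/\textsf{defineTypeOp} comes with the syntax axiom that ${\cal F}$ appends to the database at that point. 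For the judgment half, every case follows the same template: first produce a derivation of $A\vdash{\cal F}(\phi[\sigma])$ (or of $\top\vdash$ the relevant equation) via the \texttt{hol.mm} rule mirroring the OpenTheory rule, then reconcile it with the target $A\vdash B$. The reconciliation has two standard ingredients. Because \emph{reduction} allows $A$ to be a reordered, re-associated nested conjunction carrying extra conjuncts, I use Lemma~\ref{thm:simp} to pull out of $A$ the conjunct matching each hypothesis of $\Gamma$, recombine the relevant ones with \textsf{jca}, and glue with \textsf{syl}; and because ${\cal F}(\phi')$ for $\phi'$ $\alpha$-equivalent to $\phi$ is only $\alpha$-equivalent to ${\cal F}(\phi)$, each case closes with an appeal to Lemma~\ref{thm:alphaeq}/Lemma~\ref{thm:ae-sub} (and to the fact that substitution of \texttt{hol.mm}'s type variables is just ordinary Metamath substitution) to pass between the term ${\cal F}$ actually yields and the one named by the reduction witness.

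With this template, \textsf{refl}, \textsf{assume}, \textsf{eqMp}, \textsf{appThm}, \textsf{absThm}, \textsf{deductAntisym} are routine, using \texttt{hol.mm}'s \textsf{refl}, \textsf{id}, \textsf{eqmp}, \textsf{ceq}, \textsf{leq}, \textsf{ded} respectively. Two small points need care. For \textsf{eqMp} and \textsf{deductAntisym} one observes that both premises may be re-expressed with the single context $A$ (respectively $A$ with one extra conjunct), which is precisely a reduction of each premise's OpenTheory statement, so that the induction hypothesis applies to both; this is how the OpenTheory set-union on contexts is discharged. For \textsf{absThm}, whose \texttt{hol.mm} counterpart \textsf{leq} carries an ``$x$ not present in $R$'' proviso, one first replaces $A$ with an $\alpha$-equivalent context avoiding the bound variable---possible exactly because the OpenTheory side condition says $v$ is not free in $\Gamma$---applies \textsf{leq}, and returns via Lemma~\ref{thm:ae-sub}. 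The \textsf{betaConv} case is a bit more work: after $\alpha$-renaming the bound variable to a fresh $w$ (to dodge capture when $v$ occurs in $u$), I invoke Lemma~\ref{thm:eqthm} to obtain the substitution equation and its $\NF$ byproduct, and feed these, together with \textsf{beta}, into a single application of \textsf{inst}. Finally, \textsf{axiom} supplying \textsf{extensionality}, \textsf{choice}, \textsf{infinity} maps to \textsf{eta}, \textsf{ac}, \textsf{inf}, and \textsf{thm} is immediate from Lemma~\ref{thm:ae-sub}.

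The two cases that genuinely leave plain \texttt{hol.mm} are \textsf{defineConst} and \textsf{defineTypeOp}: here ${\cal F}$ has already appended the new syntax constructor(s) and defining axiom(s) to the database, and what must be supplied is the argument that the resulting extension is conservative. For \textsf{defineConst} $c$ this is a definitional extension in the strict sense---$c$ is introduced by the single equation $\top\vdash c={\cal F}(t)$ to a pre-existing closed term, hence is eliminable---and the two results $c:\alpha$ and $\vdash c=t$ produced by the rule are just the new syntax axiom and the new defining axiom, modulo \textsf{a1i}, type instantiation, and an $\alpha$-fixup. For \textsf{defineTypeOp} the two new axioms are the familiar HOL type-definition pair, and conservativity holds because the premise $\vdash\phi\,t$ certifies that the carved-out subtype is inhabited; as the paper has already flagged, Metamath treats these as ordinary axioms and leaves the conservativity certificate to an external tool, so here I only record the claim and its justification.

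The real obstacle is \textsf{subst}. Its type-variable part is absorbed ``for free'' by the type substitution that the reduction relation already quantifies over, leaving the problem of realizing a \emph{simultaneous} proper term substitution through a sequence of \texttt{hol.mm} \textsf{inst} steps, each of which acts on a single var. This forces the usual maneuvers: first rename the substituted vars to fresh ones so that the sequential composition computes the intended parallel substitution, then, at each \textsf{inst} step, discharge its side conditions---the two $\NF$ conditions and the two equation judgments of the \textsf{inst} schema---using Lemma~\ref{thm:eqthm} together with the not-free axioms \textsf{ax-17}, \textsf{hbl1}, \textsf{distrc}, \textsf{distrl}, all while tracking how the context is transformed under each substitution. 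Making this bookkeeping agree with the reduction invariant (so that the induction hypothesis on $\Gamma\vdash\phi$ is applied to the correct pre-substitution statement and the final context matches the target up to the slack that reduction permits) is the delicate part, and I expect it to be where most of the effort goes. Once Theorem~\ref{thm:reduction} is in hand, Theorem~\ref{thm:hol-mm} follows at once: by Remark~\ref{rem:f-reduces}, ${\cal F}(\Gamma\vdash\phi)$ reduces to $\Gamma\vdash\phi$.
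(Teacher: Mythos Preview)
Your proposal is correct and follows essentially the same approach as the paper's proof: induction on the derivation with a case split on the last rule, the same direct rule-to-rule mappings, the same use of Lemmas~\ref{thm:simp}--\ref{thm:ae-sub} for context reshuffling and $\alpha$-fixups, the same treatment of \textsf{absThm} via an $\alpha$-renamed context, the same decomposition of \textsf{subst} into fresh-renaming followed by iterated single-variable \textsf{inst} via Lemma~\ref{thm:eqthm}, and the same handling of \textsf{betaConv} as \textsf{beta} plus that substitution process. You are somewhat more explicit than the paper about how the reduction invariant absorbs the context unions in \textsf{eqMp}/\textsf{deductAntisym} and about the capture-avoiding $\alpha$-rename in \textsf{betaConv}, but these are elaborations of exactly the steps the paper takes rather than a different route.
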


\begin{proof}
The proof is by induction on the length of the proof of the OpenTheory statement. We break the proof into cases based on the last inference rule in the derivation tree.

\subsubsection{Direct conversions}
Many of the axioms are converted directly into equivalent axioms. Specifically:

\begin{longitem}
\item $\sf varTerm\to wv$
\item $\sf absTerm\to wl$
\item $\sf appTerm\to wc$
\item $\sf refl\to refl$
\item $\sf eqMp\to eqmp$
\item $\sf appThm\to ceq$
\item $\sf deductAntisym\to ded$
\item $\sf extensionality\to eta$
\item $\sf choice\to ac$
\item $\sf infinity\to inf$
\end{longitem}

Given derivations of all the hypotheses to one of these inferences, apply the transformed step to the transformed hypotheses (and \textsf{a1i} to the result if $A\ne\top$) to get an $\alpha$-equivalent statement, and the lemma~\ref{thm:ae-sub} finishes the job.

\subsubsection{\textsf{assume}}
This axiom is an application of lemma~\ref{thm:simp} to prove $A\vdash B'$ where $B'$ is the conjunct of $A$ that is $\alpha$-equivalent to $B$, followed by lemma~\ref{thm:ae-sub}.

\subsubsection{\textsf{absThm}}
This axiom is almost a direct application of \textsf{leq}, but the requirement is only that $\Gamma$ not have $v$ free in it, not that $v$ be completely disjoint from $\Gamma$. However, if $v$ is not free in $\Gamma$, then there is a $\Gamma'$ that is disjoint from $v$ and $\alpha$-equivalent to $\Gamma$, so \textsf{leq} proves ${\cal F}({\Gamma'})\vdash(\lambda x^\alpha. A)=(\lambda x^\alpha. B)$ from ${\cal F}({\Gamma'})\vdash A=B$ and lemma~\ref{thm:ae-sub} applied before and after turn the $\Gamma'$ into $\Gamma$ in this inference.

\subsubsection{\textsf{subst}}
\label{sec:red-subst}
There are two kinds of substitution performed by \textsf{subst}---type variable substitution and term variable substitution. If a type variable substitution is performed, so that $\Gamma=\Gamma'[\sigma]$ and $\phi=\phi'[\sigma]$, then $\Gamma'\vdash\phi'$ is also reducible to $A\vdash B$, so $A\vdash B$ is provable.

If $\sigma=[x_1\mapsto A_1,\dots,x_n\mapsto A_n]$ is a term variable substitution, then by writing this as a composition of $[x_1\mapsto y_1,\dots,x_n\mapsto y_n]$ with  $[y_1\mapsto A_1,\dots,y_n\mapsto A_n]$ where $y_i$ are dummy variables, we can reduce this to the case when no $A_i$ contains any $x_j$. Then we can rewrite it again as the composition of $\sigma_1=[x_1\mapsto A_1],\dots,\sigma_n=[x_n\mapsto A_n]$, so that we can reduce to the case of a single variable substitution. And this case is handled by axiom \textsf{inst}, with the hypotheses filled by lemma~\ref{thm:eqthm}.

\subsubsection{\textsf{betaConv}}
This axiom is an application of \textsf{beta} followed by the same substitution process described in section~\ref{sec:red-subst} (and \textsf{a1i}, lemma~\ref{thm:ae-sub}).

\subsubsection{Definitions}
\label{sec:red-def}
Lastly, we have the two definitional axioms, \textsf{defineConst} and \textsf{defineTypeOp}. In this case, we simply introduce all the output statements as axioms. We can do a little better, though; by introducing the axioms 
$$\inference{eqtypri}{B:\alpha\quad R\vdash A=B}{A:\alpha}$$
$$\inference{typedef}{\top\vdash F\;B\quad \TD_{\beta,A,R}(F,B)}
  {A:\alpha\to\beta\quad R:\beta\to\alpha\quad
    \top\vdash(A;(R\;x^\beta)=x^\beta, F\;y^\alpha=(R\;(A\;y^\alpha)=y^\alpha))}$$
we can set it up so that the only axiom needed for \textsf{defineConst} is a single axiom $\top\vdash c=t$ for the constant's definition, and the only axiom needed to define a new type is $\TD_{\beta,A,R}(F,B)$ (which is a new kind of statement designed solely for input to \textsf{typedef}). However the checking that the constant does not appear in the definition, the typedef's type constant $\beta$ lists all free type variables in use, etc.\ must still be checked outside the system. Nonetheless, as long as the original OpenTheory derivation followed these consistency rules, the transformed definition will also be conservative, for the same reasons, so it does not interfere with this proof.
\end{proof}

\begin{proof}[Proof of Theorem \ref{thm:hol-mm}] Follows immediately from Remark~\ref{rem:f-reduces} and Theorem~\ref{thm:reduction}.
\end{proof} 

\section{Part II: Conversion from \MakeLowercase{\texttt{hol.mm}} to \MakeLowercase{\texttt{set.mm}}}
In this part, we have the remaining job of transforming our Metamath representation of a HOL axiomatic system into ZFC. Although the foundations are changing, the basic functions of substitution and the like are the same on both the start and endpoint of the transformation, so we can focus on the mathematical content of the sentences without worrying as much about the exact representation of the formula. We begin by describing the model of HOL that we will build in ZFC:

\begin{definition}
  A \textit{type} $\alpha$ is a pair $\langle\iota_\alpha,b_\alpha\rangle$ of a \textit{witness} and a \textit{base set} such that $b_\alpha\in V_{\omega+\omega}$ (where $V_{\omega+\omega}$ is the second limit step of the cumulative hierarchy) and $\iota_\alpha\in b_\alpha$. Let $\Ch(\epsilon)$ denote that either $\epsilon$ is a choice function on $V_{\omega+\omega}$ or there is no such function, and define a map ${\cal S}_\epsilon$ from types, terms and statements of \texttt{hol.mm} to sets and wffs of \texttt{set.mm}.
  \begin{longitem}
    \item For the two constant types, we take ${\cal S}_\epsilon({\sf bool})=\langle1,2\rangle:=\langle1,\{0,1\}\rangle$ and ${\cal S}_\epsilon({\sf ind})=\langle0,\omega\rangle$, and define ${\cal S}_\epsilon(\top)=1$.
    \item For type variables, ${\cal S}_\epsilon(\operatorname{type}\alpha)\leftrightarrow\alpha\in{\rm Type}$, where $\alpha\in{\rm Type}$ is defined to mean that $\alpha$ is a type in the sense above; this hypothesis is implicit in all axioms that involve type variables.
    \item For convenience, define the wff predicate $\toWff(A)\iff A=1$, and the function $\toBool(\varphi)=\operatorname{if}(\varphi,1,0)$.
    \item A variable is mapped to ${\cal S}_\epsilon(x^\alpha)=\operatorname{if}(x\in b_\alpha,x,\iota_\alpha)$. 
    \item For the function type, we take ${\cal S}_\epsilon(\alpha\to\beta) = \langle(x\in b_\alpha\mapsto\iota_\beta),b_\beta^{b_\alpha}\rangle$, the set of all ZFC functions from $\alpha$ to $\beta$, with a constant function as witness.  
    \item Function application is represented by function application:\\
    ${\cal S}_\epsilon(F\;A)=(F\,`\!A)$.
    \item Lambda abstraction is represented by the mapping operator\\
    ${\cal S}_\epsilon(\lambda x^\alpha.\ A)=(x\in b_\alpha\mapsto A)$.
    \item Context conjunction is represented by conjunction:\\
    ${\cal S}_\epsilon((A,B))=\toBool(\toWff(A)\wedge\toWff(B))$.
    \item For a term, ${\cal S}_\epsilon(A:\alpha)\leftrightarrow A\in b_\alpha$.
    \item For a theorem, ${\cal S}_\epsilon(A\vdash B)\leftrightarrow \vdash(\Ch(\epsilon)\wedge\toWff(A))\to\toWff(B)$.
    \item The equality operator $=_\alpha$ is mapped, depending on its type, to ${\cal S}_\epsilon(=_\alpha) = (x\in b_\alpha\mapsto(y\in b_\alpha\mapsto\toBool(x=y)))$
    \item The indefinite descriptor $\epsilon_\alpha$ itself is mapped, depending on its type, to
    $${\cal S}_\epsilon(\epsilon_\alpha) = (f\in b_\alpha^{\{0,1\}}\mapsto\operatorname{if}(\forall x\in b_\alpha.\ f(x)=0,\iota_\alpha,\epsilon(\{x\in b_\alpha:f(x)=1\}))).$$
    \item A defined type ${\cal S}_\epsilon(\TD_{\beta,A,R}(F,B))$ asserts that $\beta=\langle B,\{x\in b_\alpha:\toWff(F(x))\}\rangle$, $A=(x\in b_\alpha\mapsto \operatorname{if}(x\in b_\beta,x,B))$, and $R=(x\in b_\beta\mapsto x)$.
  \end{longitem}
\end{definition}

Most of these definitions are exactly what you would expect---functions are functions, and types and terms map to sets and their elements. The unusual part of the definition deals with the indefinite descriptor $\epsilon$. HOL is based on a version of the Axiom of Choice that is stronger than the usual one in ZFC. Instead of asserting that for any set there exists a choice function on that set, it asserts that a \textit{specific} function is a choice function on the universe. If the HOL universe were a proper class, this would be problematic, but luckily it can be entirely contained within $V_{\omega+\omega}$, which is a set in ZFC, and thus the ZFC axiom of choice gives us a single choice function $\epsilon$ on all of $V_{\omega+\omega}$. We pass this in as a parameter to $\cal S$, so that we can give meaning to the various pieces of the formula that use $\epsilon$, and theorems assert the choice behavior of $\epsilon$, so that it can be used in derivations. The reason for the ``or if there is no such function'' proviso is to allow the proof of $\vdash{\cal S}_\epsilon(\top\vdash A)\implies\vdash\toWff(A)$ to avoid choice, so that the ZFC choice axiom never gets invoked unless \textsf{ac} does. Since we also require a term variable to have unconditional closure, we are forced to add witnesses so that we don't need to invoke choice by using $\epsilon$ to select elements.

The construction here is performed using $V_{\omega+\omega}$, but of course it is also possible to use $V_\delta$ for any limit ordinal $\delta>\omega$, and this can be passed in as an extra parameter to make a translation ${\cal S}_{\delta,\epsilon}$ which depends on $\delta$. However, as this makes the translation process more cumbersome to describe, we will assume $\delta=\omega+\omega$ and leave the extension process to those who need the extra power this affords.

\subsection{Proving the \texttt{hol.mm} axioms}

\begin{theorem}
\label{thm:set-mm}
If $A\vdash B$ is derivable in \normalfont{\texttt{hol.mm}}, then $\vdash{\cal S}_\epsilon(A\vdash B)$ is derivable in \normalfont{\texttt{set.mm}}. In particular, if $A\vdash B$ does not contain $\epsilon$, then $\vdash{\cal S}(A)\to{\cal S}(B)$ is derivable (where the $\epsilon$ has been dropped from the notation to indicate that the action of ${\cal S}$ does not depend on $\epsilon$).
\end{theorem}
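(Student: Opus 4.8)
The plan is to argue by induction on the length of the \texttt{hol.mm} derivation, with one case for each of the 23 axioms in figure~\ref{fig:holmmaxioms}, plus the extra axioms \textsf{eta}, \textsf{ac}, \textsf{inf}, and the definitional machinery \textsf{eqtypri}, \textsf{typedef}. For each axiom, I would translate the hypotheses via ${\cal S}_\epsilon$, assume the resulting \texttt{set.mm} statements, and show the translated conclusion follows in ZFC. The ``in particular'' clause will be handled alongside: I would check that for every $\epsilon$-free axiom the proof of the translated conclusion nowhere invokes the ZFC axiom of choice and nowhere unpacks $\Ch(\epsilon)$, so that the weaker statement $\vdash{\cal S}(A)\to{\cal S}(B)$ (with $\Ch(\epsilon)$ dropped) is what actually gets proved; the only axiom that breaks this is \textsf{ac}, whose translation genuinely needs $\epsilon$ to be a choice function, which is exactly where $\Ch(\epsilon)$ earns its keep.

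First I would dispatch the ``syntax'' and ``context'' axioms, which are routine. The term-formation axioms \textsf{wtru}, \textsf{wct}, \textsf{wc}, \textsf{wv}, \textsf{wl}, \textsf{weq} unfold to closure facts like $1\in\{0,1\}$, $(F\,`\!A)\in b_\beta$ when $F\in b_\beta^{b_\alpha}$ and $A\in b_\alpha$, the witness clause giving ${\cal S}_\epsilon(x^\alpha)\in b_\alpha$ unconditionally, and the mapping operator landing in $b_\beta^{b_\alpha}$; each is a one-line set-theoretic computation. The context axioms \textsf{id}, \textsf{syl}, \textsf{jca}, \textsf{simpl}, \textsf{simpr}, \textsf{trud}, \textsf{refl}, \textsf{eqmp}, \textsf{ded}, \textsf{ceq}, \textsf{leq}, \textsf{a1i} all become propositional or equational manipulations once $A\vdash B$ is read as $(\Ch(\epsilon)\wedge\toWff(A))\to\toWff(B)$ and $\toBool/\toWff$ are unfolded; \textsf{ded} uses that $\toBool$ and $\toWff$ are mutually inverse on $\{0,1\}$, and \textsf{refl}, \textsf{eqmp} use that ${\cal S}_\epsilon(=_\alpha)$ decides equality on $b_\alpha$ via $\toBool$.

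The substantive cases are the $\beta$-reduction and not-free axioms \textsf{hbl1}, \textsf{distrc}, \textsf{ax-17}, \textsf{distrl}, \textsf{beta}, and then \textsf{inst}. For these I would compute both sides as mapping operators and function applications, using the key facts that $(x\in b_\alpha\mapsto E)\,`\!B = E[B/x]$ when $B\in b_\alpha$ (a \texttt{set.mm} beta-reduction lemma for the maps-to notation), that a map whose body does not mention the bound variable is a constant function (this is where the proviso ``$x$ not present in $A$'' in \textsf{ax-17} translates to $x$ genuinely not free in the \texttt{set.mm} term ${\cal S}_\epsilon(A)$, so the maps-to is constant and evaluates to ${\cal S}_\epsilon(A)$), and that maps-to distributes over application and over nested maps-to (giving \textsf{distrc} and \textsf{distrl}) when the relevant variables are distinct. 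For \textsf{inst}, the two equational hypotheses say precisely that ${\cal S}_\epsilon(B)$ and ${\cal S}_\epsilon(S)$ do not depend on $x$ (via $\NF$), the conditional hypotheses say that substituting $C$ for $x$ turns $A$ into $B$ and $R$ into $S$, and the conclusion follows by specializing the hypothesis $R\vdash A$ at the value ${\cal S}_\epsilon(C)$; the main subtlety is matching the \texttt{hol.mm}-level not-free predicate $\NF(x^\alpha,-)$ to genuine non-occurrence after translation, and checking the distinct-variable provisos are respected under ${\cal S}_\epsilon$. Finally \textsf{eta} unfolds to function extensionality on $b_\beta^{b_\alpha}$, \textsf{inf} to the existence of an injection-but-not-surjection $\omega\to\omega$ (e.g.\ successor), \textsf{ac} to the defining property of the choice function, reading off ${\cal S}_\epsilon(\epsilon_\alpha)$ and using $\Ch(\epsilon)$, and \textsf{eqtypri}, \textsf{typedef} to direct verifications that the sets exhibited in ${\cal S}_\epsilon(\TD_{\beta,A,R}(F,B))$ do form a type (the base set is in $V_{\omega+\omega}$ since it is a subset of $b_\alpha\in V_{\omega+\omega}$, and it is nonempty precisely because $\top\vdash F\;B$ holds) with $A,R$ the advertised abstraction/representation maps.

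I expect the main obstacle to be the \textsf{inst} case: it is the only axiom that simultaneously juggles a bound-variable substitution, two not-free side conditions, and the object-level hypothesis $R\vdash A$, and getting the bookkeeping right — that ${\cal S}_\epsilon$ of the substituted term equals the \texttt{set.mm} substitution of ${\cal S}_\epsilon(C)$ into ${\cal S}_\epsilon(A)$, and that this is legal given the distinct-variable setup — is the delicate part. A secondary obstacle, pervasive rather than hard, is the discipline of the ``in particular'' clause: one must verify for every single non-\textsf{ac} case that the constructed \texttt{set.mm} proof is choice-free, i.e.\ that $\Ch(\epsilon)$ is carried along inertly and never opened, which is what licenses dropping it from the notation.
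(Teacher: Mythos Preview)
Your induction on the derivation, with one case per axiom, is exactly the paper's strategy, and your handling of the individual axioms is in line with what the paper does (the paper is terser, citing \texttt{set.mm} theorem labels like \textsf{vtoclf} for \textsf{inst} rather than spelling out the bookkeeping you anticipate, but the content is the same).

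The gap is in your treatment of the ``in particular'' clause. Your plan is to verify, axiom by axiom, that whenever $\epsilon$ does not occur the \texttt{set.mm} proof never unpacks $\Ch(\epsilon)$, so that one may drop it and obtain $\vdash{\cal S}(A)\to{\cal S}(B)$ directly. But the hypothesis of the clause is only that the \emph{statement} $A\vdash B$ is $\epsilon$-free, not that its \emph{derivation} is; a \texttt{hol.mm} proof of an $\epsilon$-free sequent may well pass through \textsf{ac} or through $\epsilon$-containing intermediate terms, and your per-axiom discipline gives you nothing about such derivations. The paper's argument is different and does not require any case-by-case choice-freeness check: from the main part one has $\vdash(\Ch(\epsilon)\wedge{\cal S}(A))\to{\cal S}(B)$; since $\epsilon$ does not occur in ${\cal S}(A)$ or ${\cal S}(B)$ (the only clause of ${\cal S}_\epsilon$ that mentions $\epsilon$ is the one for $\epsilon_\alpha$), one applies \textsf{exlimiv} to get $\vdash\exists\epsilon.\,\Ch(\epsilon)\to({\cal S}(A)\to{\cal S}(B))$, and then discharges $\exists\epsilon.\,\Ch(\epsilon)$. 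The point of the ``or there is no such function'' proviso in the definition of $\Ch$ is precisely that $\exists\epsilon.\,\Ch(\epsilon)$ is provable outright (take $\epsilon=0$ if no choice function exists), so no appeal to the axiom of choice is needed at this step. You should replace your inductive choice-tracking with this existential-generalization argument.
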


\begin{proof}
Here we merely need to verify that each of the axioms is preserved under wrapping by ${\cal S}_\epsilon$. Note that ${\cal S}_\epsilon$ can actually be defined in \texttt{set.mm}, so that each of the axioms, in ${\cal S}_\epsilon$-wrapped form, can be proven as theorems within \texttt{set.mm}, and then the transformation will be one-to-one in terms of proof length. Also, keep in mind that there are implicit hypotheses that $\alpha$, $\beta$, etc.\ are types; these become explicit hypotheses during this translation. Abbreviated proofs for each axiom are presented, using theorem labels for existing \texttt{set.mm} proofs when necessary \cite{setmm}.

\begin{longitem}
  \item $\operatorname{type}\,\sf bool$: $1\in 2\in V_{\omega+\omega}$
  \item $\operatorname{type}\,\sf ind$: $0\in \omega\in V_{\omega+\omega}$
  \item $\operatorname{type}\ (\alpha\to\beta)$: If $\operatorname{rank}(b_\alpha)=m<\omega+\omega$ and $\operatorname{rank}(b_\beta)=n<\omega+\omega$, then $\operatorname{rank}(b_\beta^{b_\alpha})\le \max(m,n)+3<\omega+\omega$
  \item \textsf{wtru}: $1\in 2$
  \item \textsf{wct}: $\toBool$ of anything is in $2$
  \item \textsf{wc}: If $f:\alpha\to\beta$ and $x\in\alpha$, then $f(x)\in\beta$
  \item \textsf{wv}: If $x\in b_\alpha$, then $\operatorname{if}(x\in b_\alpha,x,\iota_\alpha)=x\in b_\alpha$, otherwise $\iota_\alpha\in b_\alpha$
  \item \textsf{wl}: If for all $x$, $T\in b_\beta$, then $(x\in\alpha\mapsto T):b_\alpha\to b_\beta$
  \item \textsf{id}: theorem \textsf{simpr}\footnote{This and the other theorem names mentioned here refer to theorem labels in \texttt{set.mm}. There are individual web pages for these, for example \textsf{simpr} is found at \url{http://us.metamath.org/mpegif/simpr.html}.}
  \item \textsf{syl}: theorem \textsf{syldan}
  \item \textsf{jca}: theorem \textsf{jca}
  \item \textsf{weq}: use the definition and two applications of \textsf{wl} 
  \item \textsf{simpl,simpr}: theorem \textsf{simpl,simpr}
  \item \textsf{trud}: theorem \textsf{a1i,tru}
  \item \textsf{refl}: theorem \textsf{eqidd}
  \item \textsf{eqmp}: theorem \textsf{mpbid} after showing $\toWff(A=B)\leftrightarrow(\toWff(A)\leftrightarrow\toWff(B))$ by case analysis)
  \item \textsf{ded}: theorem \textsf{impbid,expr}
  \item \textsf{ceq}: theorem \textsf{fveq12d}
  \item \textsf{leq}: theorem \textsf{mpteq2dv}
  \item \textsf{hbl1}: theorem \textsf{hbmpt1}
  \item \textsf{distrc,distrl}: short proof using \textsf{vtoclg,fvmpt2}
  \item \textsf{ax-17}: theorem \textsf{fvmpt,eqidd}
  \item \textsf{beta}: theorem \textsf{fvmpt2}
  \item \textsf{inst}: theorem \textsf{vtoclf}
  \item \textsf{eta}: theorem \textsf{dffn5v}
  \item \textsf{inf}: short proof using $(x\in\omega\mapsto x+1)$
  \item \textsf{ac}: By definition, $\epsilon_\alpha(p)=\operatorname{if}(\forall x\in b_\alpha.\ p(x)=0,\iota_\alpha,\epsilon(\{x\in b_\alpha:p(x)=1\}))$, but since $p(x)=1$, the if-condition is false, so $\epsilon_\alpha(p)=\epsilon(\{x\in b_\alpha:p(x)=1\}))$. Assuming \textsf{ax-ac}, $\epsilon$ is a choice function on $V_{\omega+\omega}$, and $\{x\in b_\alpha:p(x)=1\}$ is a nonempty subset of $b_\alpha\in V_{\omega+\omega}$, so $\epsilon(\{x\in b_\alpha:p(x)=1\})\in \{x\in b_\alpha:p(x)=1\}$, and thus $p(\epsilon_\alpha(p))=1$.
\end{longitem}

For the final statement, observe that no ${\cal S}_\epsilon$ transformation depends on $\epsilon$ except ${\cal S}_\epsilon(\epsilon_\alpha)$, so if $\epsilon$ is not in $A$ or in $B$ then it will not be in the right hand side of $\Ch(\epsilon)\to({\cal S}(A)\to{\cal S}(B))$, and theorem \textsf{exlimiv} turns this into $\exists\epsilon.\Ch(\epsilon)\to({\cal S}(A)\to{\cal S}(B))$. But $\exists\epsilon.\Ch(\epsilon)$ is provable, because if there is a choice function $\epsilon$ on $V_{\omega+\omega}$ then $\Ch(\epsilon)$ for that choice of $\epsilon$ and if not then $\Ch(0)$ is true.
\end{proof}

\section{Future Work}
We stopped at Part II here, but one can argue the existence of a part III to this translation project, where notations such as the HOL Light natural numbers are mapped via the natural isomorphisms to the \texttt{set.mm} natural numbers, so that a statement like the Prime Number Theorem, which is (as of this writing) proven in HOL Light but not in \texttt{set.mm}, can be said to be proven in its ``natural form'', rather than in some model. This process is much less formulaic, however, and requires individual considerations of each mathematical concept in order to identify the proper isomorphisms.

\begin{acks}
The author wishes to thank Norman Megill, Bob Solovay, and Raph Levien for their work on investigating Part II in many email discussions, as well as John Harrison for creating HOL Light and Joe Leslie-Hurd for creating OpenTheory and providing the source material for figure~\ref{fig:otaxioms}.
\end{acks}

\begin{received}
Received February 20xx;
November 20xx;
accepted January 20xx
\end{received}

\nocite{*}
\bibliographystyle{alpha}
\bibliography{paper}

\end{document}